\title{A Hoare Logic with \\ Regular Behavioral Specifications}
\author{Gidon Ernst \inst{1}
       \and Alexander Knapp \inst{2}
       \and Toby Murray \inst{3}}
\institute{LMU Munich, Germany, \email{gidon.ernst@lmu.de}
        \and Augsburg University, Germany, \email{knapp@informatik.uni-augsburg.de}
        \and University of Melbourne, Australia, \email{toby.murray@unimelb.edu.au}}
\Crefname{equation}{Eq.}{Eqs.}
\Crefname{figure}{Fig.}{Figs.}
\Crefname{tabular}{Table}{Tabs.}
\Crefname{example}{Ex.}{Exs.}
\Crefname{section}{Sec.}{Sects.}
\Crefname{corollary}{Cor.}{Cors.}
\Crefname{proposition}{Prop.}{Props.}
\Crefname{theorem}{Thm.}{Thms.}
\Crefname{lemma}{Lem.}{Lems.}
\Crefname{algorithm}{Alg.}{Algs.}
\Crefname{definition}{Def.}{Defs.}
\newcommand{\red}[1]{{\color{red}#1}}
\newcommand{\code}[1]{\texttt{\upshape #1}}
\newcommand{\event}[1]{\code{#1}}
\newcommand{\hoare}[3]{\{\,#1\,\}~#2~\{\,#3\,\}}
\newcommand{\trace}[1]{\langle\,#1\,\rangle}
\newcommand{\derive}[2]{\delta_{#1} #2}
\newcommand{\step}[1]{\xrightarrow{~#1~}}
\newcommand{\steps}[1]{\mathrel{\xrightarrow{~#1~}\!\!{^*}}}
\renewcommand{\L}{\mathcal{L}}
\newcommand{\concat}{{\cdot}}
\newcommand{\aborted}{\mathbf{abort}}
\newcommand{\stopped}[1]{(\mathbf{stop}\ #1)}
\newcommand{\running}[2]{(\mathbf{run}\ #1\colon #2)}
\newcommand{\SPEC}[3]{#1\colon [#2 \leadsto #3]}
\newcommand{\DO}{\code{do}}
\newcommand{\SKIP}{\code{skip}}
\newcommand{\WHILE}{\code{while}}
\newcommand{\IF}{\code{if}}
\newcommand{\THEN}{\code{then}}
\newcommand{\ELSE}{\code{else}}
\newcommand{\END}{\code{end}}
\newcommand{\EMIT}{\code{emit}}
\begin{document}
\maketitle

\begin{abstract}
We present a Hoare logic that extends program specifications
with regular expressions that capture behaviors
in terms of sequences of events that arise during the execution.
The idea is similar to session types or process-like behavioral contracts,
two currently popular research directions.
The approach presented here strikes a particular balance between expressiveness and proof automation,
notably, it can capture interesting sequential behavior across multiple iterations of loops.
The approach is modular and integrates well with autoactive deductive verification tools.
We describe and demonstrate our prototype implementation in SecC using two case studies:
A matcher for E-Mail addresses
and a specification of the game steps in the VerifyThis Casino challenge.
\end{abstract}

\section{Introduction}
\label{sec:introduction}

Context of this work is the aim to specify the state machine logic of programs that have a control state
and in which the sequence of events matters.
A typical example are stateful protocols (e.g. the TCP handshake) and device drivers.
One motivation for this paper in particular is the Casino case study
from the ongoing VerifyThis discussion series since 2021.%
    \footnote{\url{https://verifythis.github.io/casino/}}
Goal of these discussions is to bring different communities together
and bridge specifications using contract-based mechanisms typically found in deductive verification tools and the automata-based approaches of model-checking.
Of particular interest was the integration of
automata-like specifications and models with deductive program verification tools.
Existing approaches include an embedding of a highly expressive process-calculus language
into the Separation Logic assertions~\cite{penninckx2019specifying,sprenger2020igloo,oortwijn2020abstraction}.
At the end of the spectrum are approaches where the C program itself
is abstracted such that it can be checked directly using a software model checker.
Such techniques can cope with temporal properties \cite{dietsch2015fairness,urban2018abstract}.
Of course, the research area of attributing and verying temporal behavior of systems
has a long history~\cite{pnueli1977temporal,lamport1994temporal,alur2010temporal}.

In this paper we discuss a particular point in the design space:
We embed declarative behavioral specification of event traces, written as regular expressions,
into the pre- and postconditions of a Hoare-like logic.
The approach combines strong reasoning about the functional correctness of programs
with a light-weight and fully automatic integration of behavioral aspects into contracts.

% It turns out that the naive approach
% to take as the specification simply the repetition of the loop body's traces is not that useful,
% as it cannot express sequential dependencies of the occurrence of events.
% This lack of expresiveness arises already for programs that encode simple state machines, including that of the Casino case study, and the motivational example shown in \cref{sec:motivation} which alternates between emitting two different events.
% In order to address this, we employ regular expression specifications that can be conditioned on arbitrary predicate logic formulas, such that the verification approach becomes complete wrt. behaviors that can be expressed by regular languages.

As shown in \cref{sec:motivation}, it turns out that the straight-forward approach to
specifying loops just by using the repetition operator is not expressive enough,
as it fails to capture sequential behavior across loop iterations,
which necessitates to include logical conditions into the annotation language.
In \cref{sec:approach}, we develop a Hoare-like logic with judgements $\hoare{P \mid U}{c}{Q \mid V}$, where in addition to the ordinary pre-/postcondition pair $P,Q$ for command~$c$, we have two regular trace specifications $U$ and $V$ that capture
which events may be emitted during the execution of the program.
The logic is proved sound in the Isabelle/HOL proof assistant and we suggest a corrresponding completeness result.
The clear advantage of regular expressions over more complex languages
is that the additional verification obligations can be decided automatically with little additional effort in the verification.
We have implemented the approach in the deductive verification tool SecC~\cite{ernst:cav2019},
as discussed with details on this automation in \cref{sec:implementation}.
We illustrate and discuss in depth the approach with two case studies:
A matcher for E-Mail addresses (\cref{sec:lex}) and the Casino challenge (\cref{sec:casino}).
% In \cref{sec:case-studies},

The \emph{contribution} of this paper is therefore
to propose this approach as natural point in an active research area
(\cref{sec:related})
and to showcase its merits and limitations
on practical examples, backed by a tool implementation.

%%% Local Variables:
%%% mode: LaTeX
%%% mode: TeX-PDF
%%% TeX-source-correlate-mode: t
%%% TeX-master: "main.tex"
%%% ispell-local-dictionary: "en_GB"
%%% End:

\section{Motivation}
\label{sec:motivation}

As an example consider a loop that generates a sequence
of alternating events $\event{even}$ and $\event{odd}$,
where the loop test nondeterministically terminates the loop
but only in a state when $b$ is \code{false}, expressed as pseudocode:

\begin{wrapfigure}{l}{0.33\textwidth}
\vspace*{-0.8cm}
\hrule
\vspace*{-0.3cm}
\begin{align*}
& b \coloneqq \code{false} \\[-.6ex]
& \WHILE\ \code{*} \lor b \ \DO \\[-.6ex]
& \quad \IF\ b\ \THEN\ \EMIT\ \event{odd}\\[-.6ex]
& \quad \phantom{\IF\ b\ } \ELSE\ \EMIT\ \event{even} \\[-.6ex]
& \quad b \coloneqq \lnot b \\[-.6ex]
& \END
\end{align*}
\vspace*{-0.8cm}
\caption{Event alternation}
\vspace*{+0.1cm}
\hrule
\vspace*{-0.8cm}
\label{fig:motivation:even-odd}
\end{wrapfigure}
The specification command $\EMIT~a$ demarks
the occurrence of an event during program execution,
and we are interested in specifying the behavior of programs
in terms of the possible event traces that can occur at runtime.
Such events are sometimes attributed to certain steps of the program's semantics,
but here we just keep them abstract and assume that the right events have been placed at appropriate locations in the source code in terms of $\EMIT$ statements.
%
% In this paper we will use regular expressions because of their computational tractability,
% notably, we will make use of the fact that the language inclusion problem is decidable.
% However, in practice, other specification languages such as context-free grammars,
% linear temporal logic, or more general process-based languages are possible,
% supported by heuristics and/or semi-automated reasoning approaches.
The behavior of the above program is captured nicely by
\begin{align}
\textbf{program specification} &
    &
(\event{even} \concat \event{odd})^*
    \label{spec:ex1}
\end{align}
where $\concat$ denotes concatenation and $^*$ denotes repetition.

The program is indeed correct with respect to this specification,
informally because the loop body is executed for the first time with $\lnot b$,
thus emitting an $\event{even}$ event first,
the loop alternates between the two events,
and terminates only after no event or after an $\event{odd}$ event has been emitted.
To make this argument formally precise, we aim at a verification approach that
1) is complete with respect to regular languages and
2) integrates well into existing modular approaches that can be presented in the style of Hoare logic.
% Notably, we aim at specifying the traces generated by a procedure modularly
% as part of their respective contract,
% and we aim to support reasoning about loops in terms of a specification of the loop body, similarly to loop invariants.

A simple but straightforward idea to specify the traces of a while loop
is to repeat whatever the body produces, using the star operator.
In the example above, the behavior of loop body can be abstracted by 
the regular expression $(\event{even} \mid \event{odd})$
from which we can conclude that the loops adheres to $(\event{even} \mid \event{odd})^*$.
However, this naive approach cannot capture the
alternation of \event{even} and \event{odd} events as produced by the loop.
This coincides with the fact that the finite automaton
that accepts the same language as~\eqref{spec:ex1} has \emph{two} states,
thus, we need \emph{two} regular expressions
to describe traces with respect to the current value of~$b$.

Just as with the functional component of loop specifications,
we can choose whether we want to describe the events observed so far (analogously to an invariant),
or alternatively whether we want to specify
the traces that are still permitted to occur (analogously to a loop postcondition or summary~\cite{hehner2005specified,tuerk2010local,ernst:vmcai2022}).
For this example, the alternatives are as follows
\begin{align*}
\textbf{loop invariant} &
& (\event{even} \concat \event{odd})^* \
    &\IF\ \lnot b
& (\event{even} \concat \event{odd})^* \concat \event{even} \
    &\IF\ b
    \\
\textbf{loop summary} &
& (\event{even} \concat \event{odd})^* \
    &\IF\ \lnot b
& \event{odd} \concat (\event{even} \concat \event{odd})^* \
    &\IF\ b
\end{align*}
The regular expressions are conditioned upon a formula which depends on the value of $b$ in some arbitrary intermediate state encountered at the loop head.
If~$b$ is currently true, then the invariant expresses that this current state
has been reached by a trace prefix that ends in an $\event{even}$ event,
such that the subsequent iteration \emph{concatenating} an $\event{odd}$ to the end of this regular expression will again give~\eqref{spec:ex1}.
Conversely, the loop summary in this case makes it explicit that the next
event expected is an $\event{odd}$ event.
In that regard, the two approaches are dual to one each other and perfectly symmetric for regular expressions because the repetition operator can be expressed as both a left-fold and as a right-fold recursively (which is not the case for context-free grammars, in general).
For now we will focus on the invariant approach
as it is conceptually more similar to how functional correctness is typically established.
% and also the underlying meta-theory appears to be simpler.
% We remark, however, that the initial implementation in SecC was actually based on the second approach and we will contrast how the proof obligations work out later.
% Soundness of the summary-like approach seems to require
% a backwards simulation argument that we have yet to put together,
% whereas the soundness proof for the invariant-like approach is completely straight-forward.
% We conjecture that both approaches are equivalent in expressive power,
% though we leave this matter open for future work.

%%% Local Variables:
%%% mode: LaTeX
%%% mode: TeX-PDF
%%% TeX-source-correlate-mode: t
%%% TeX-master: "main.tex"
%%% ispell-local-dictionary: "en_GB"
%%% End:

\section{Approach}
\label{sec:approach}

In this section, we show an extension of Hoare logic that integrates
trace specifications in terms of regular expressions with deductive proofs,
where program correctness is expressed as judgements of the form $\hoare{P \mid U}{c}{Q \mid V}$
with respect to pre-/postconditions $P$, $Q$ and a command $c$.
In addition, there are two conditional regular expressions $U$ and $V$,
subsequently called regular behavioral specifications,
denoting the trace prefix and resulting trace, respectively.
% This section continues with a formal definition of the trace specifications in \cref{sec:regular},
% followed by the semantic definition of judgements in \cref{sec:commands}
% with respect to a simple imperative programming language.
% The corresponding proof rules are presented \cref{sec:logic} alongside some examples.

\vspace*{-.4ex}
\paragraph{Preliminaries.} Conditions $P,Q$ and $\phi$ as well as binary relations~$R$
are represented syntactically,
where the latter are formulas involving also primed variables, e.g., to denote successor states and nondeterministic transitions.
For example $x' > x$ encodes that the value of program variable $x$ is strictly increased by a nondeterministic value.
Ascribing a prime symbol to a predicate as in~$P'$ is understood to prime all of its free variables, similarly $U'$ is~$U$ with all free variables in all conditions
replaced by their primed counterpart.
Semantically, formulas can be evaluated over states~$s \in S$,
written $P_s$ for instance, to result in a semantic truth value.
Usually, we treat primed variables just as other variables,
however, occasionally we denote by $R_{s,s'}$ the evaluation of relation $R$
taking the unprimed variables from state~$s$
and the primed variables from $s'$.

\subsection{Regular Behavioral Specifications}
\label{sec:regular}

Plain regular expressions~$u,v,w$ consist of
the empty language, the empty word, symbols of an alphabet $a \in A$,
sequential composition, choice, and repetition:
\begin{align*}
u   & \Coloneqq \varnothing \mid \epsilon \mid a \mid (u \concat v) \mid (u {\mid} v) \mid u^*
\end{align*}
The language $\L(u)$ of an expression~$u$ is defined as usual
as a set of words which are finite sequences of symbols, representing behavioral traces $\tau = \trace{a_1, \ldots, a_n}$ here.
A regular expression~$u$ is called \emph{nullable} if its language contains the empty word, i.e., $\trace{} \in \L(u)$;
and~$u$ is called \emph{empty} if its language is the empty set, i.e., $\L(u) = \varnothing$.
Language inclusion and equivalence are defined as follows
\begin{align*}
u \sqsubseteq v
    & \iff \L(u) \subseteq \L(v) &
u \equiv v
    & \iff \L(u) = \L(v) &
\end{align*}
such that $u$ is nullable if and only if $\epsilon \sqsubseteq u$,
and $u$ is empty if $u \equiv \varnothing$.
Nullability and emptiness can be checked efficiently by a simple recursion over the syntax.

\begin{definition}[Regular Behavioral Specification]
A regular behavioral specification~$U,V,W$
is a state-dependent choice between plain regular expressions~$u_i$:
% each conditioned with a syntactic formula~$\phi_i$:
\begin{align*}
U \Coloneqq ~~u_1\ \IF\ \phi_1 ~~ \mid ~ \cdots ~ \mid ~~ u_n\ \IF\ \phi_n
\end{align*}
\end{definition}
We tacitly interpret a plain regular expression
$u\ \IF\ \mathit{true}$
with a trivial guard as such a specification.
Evaluation in a particular state simply collects
the options with a valid test, and discards all others:
\begin{align*}
(u_1\ \IF\ \phi_1 \mid \cdots \mid u_n\ \IF\ \phi_n)_s
    &= 
(u_1\ \IF\ \phi_1)_s \mid \cdots \mid (u_n\ \IF\ \phi_n)_s
    & \text{where}
    \\
(u\ \IF\ \phi)_s
    &= u \text{ if } \phi_s \quad \text{ and }  \quad
(u\ \IF\ \phi)_s
    = \varnothing \text{ otherwise}
\end{align*}
Later, for rule \textsc{Frame} in \cref{sec:logic},
we refer to sequential composition of such specifications which semantically obeys
$(U \concat V)_s = U_s \concat V_s$
but we refrain here from giving a general syntactic account for brevity,
and because this construct may be somewhat
misleading as $U$ and $V$ are evaluated in the \emph{same} state,
even though sequential composition suggests that $V$ might occur \emph{after} $U$,
and maybe after a state change in the program.
However, we can freely move state-independent regular expression fragments
in and out of this composition as in $(u \concat V)_s = u \concat (V_s)$,
which will be the use case in the implementation (\cref{sec:implementation}).

In order to reason about these conditionals,
we reflect language inclusion and equivalence
as predicates with with the following semantics
\begin{align*}
(U \sqsubseteq V)_s
    & \iff U_s \sqsubseteq V_s
    &
(U \equiv V)_s
    & \iff U_s \equiv V_s
\end{align*}
such that, e.g., $P \implies U \sqsubseteq V$ can be regarded as a logical formula.
This will be useful in particular to express a strong consequence rule in \cref{sec:logic}.

\begin{comment}

A regular expression specification~$U$ can be reduced to a plain regular expression wrt. $P$, written $U_P$,
by keeping only those $u_i$ from each option $u_i\ \IF\ \phi_i$ where $P \land \phi_i$ is satisfiable,
which can be checked for example by an SMT Solver.
Then, $P \implies U \sqsubseteq V$ holds if (but not only if) $U_P \sqsubseteq U_Q$,
which can be checked algorithmically as stated above.
\red{only needed for summaries}
$U$ is called \emph{complete}, if $\phi_1 \land \cdots \land \phi_n$ is a tautology,
and we define the \emph{standard completion} of (any)~$U$ as
the addition by an additional term
\[ U \mid \epsilon\ \IF\ \lnot\phi_1 \land \cdots \land \lnot\phi_n \]
If $U$ is already complete this extension preserves the meaning of~$U$.
On the other hand, if $U$ is the empty choice, standard completion
yields just~$\epsilon$ a the default behavior of programs that emit no events.
Completion is crucial to ensure that program paths are not discarded out accidentally, because the specification holds vacously.%
    \footnote{Anecdote: This did in fact occur during development of tool support,
              because the verification of some regression tests suddenly succeeded
              instead of producing a proof failure as it would have been expected.}
\end{comment}

\subsection{Programs and Behavioral Correctness}
\label{sec:commands}

Imperative commands~$c$ are formed by the grammar shown below,
comprising specification statements~\cite{morgan1988specification}
and the usual composition constructs:
\begin{align*}
c \Coloneqq \SPEC{\vec x}{G}{R \mid U} \mid c_1;c_2 \mid
    \IF\ t\ \THEN\ c_1\ \ELSE\ c_2 \mid
    \WHILE\ t\ \DO\ c \mid \cdots
\end{align*}
Atomic commands are subsumed by specification statements $\SPEC{\vec x}{G}{R \mid U}$, extended by a regular behavior.
These can abstract over an arbitrary program fragment:
Provided that guard~$G$ holds in a given state,
this command takes a nondeterministic transition by modifying the variables~$\vec x$
according to a transition relation~$R$, and by emitting a trace of the language of $U$.
The specification statement can encode some more conventional
constructs like $\SKIP$ and assignments,
but also the emission of a single event as shown earlier, $\EMIT\ a \equiv \SPEC{-}{\code{true}}{\code{true} \mid a}$, with no modified variables and no constraints on the transition.

Commands~$c$ execute according to a natural big-step semantics
$k \steps{\tau} k'$
from an initial configuration $\running{s}{c}$ with state~$s$
to a final configuration that is either $\stopped{s'}$ (regular termination in final state $s'$) or $\aborted$ (subsuming runtime errors).
The sequence of events that have happened during this particular execution
is annotated as a trace~$\tau$.
The definition of the rules governing $k \steps{\tau} k'$ are entirely standard,
except perhaps for the specification statement:
\begin{align*}
\running{s}{\SPEC{\vec x}{G}{R \mid U}} &\step{\tau} \aborted
    && \hspace*{-.5em}\text{if not } G_s \text{ and } \tau \text{ arbitrary\footnotemark}
    \\[-.5ex]
\running{s}{\SPEC{\vec x}{G}{R \mid U}} &\step{\tau} \stopped{s'}
    && \hspace*{-.5em}\text{if } G_s \land R_{s,s'} \land \tau \in U_s \text{ and } s' = s[\vec x \mapsto \vec v]
    % Note: using \land because it looks better than comma
\end{align*}
where $s[\vec x \mapsto \vec v]$ denotes the modified state
in which the variables $\vec x$ have been updated to some arbitrary new values $\vec v$,
leaving all other variables unchanged.

As a consequence, the derived statement $\EMIT~a$ behaves as expected
\begin{align*}
\running{s}{\EMIT~a} &\step{\trace{a}} \stopped{s}
\end{align*}
In this paper we do not address the issue of (non-)termination---%
diverging runs are simply not generated by this semantics,
and the resulting logic will express functional correctness as well as
behavioral correctness with respect to the trace for terminating executions only.
\footnotetext{The arbitrary trace~$\tau$ reflects that a diverging program could have any effect.}

Judgements
$\hoare{P \mid U}{c}{Q \mid V}$
comprise the usual constituents,
precondition~$P$, command~$c$, and postcondition~$Q$,
as well as the (state-dependent) regular expressions specification~$U$
over the alphabet $A$ collecting possible trace prefixes seen so far,
and~$V$ constraining how these may be extended by executing~$c$.
\begin{definition}[Valid  Hoare Triples]
    \label{def:hoare}
A Hoare triple
$ \hoare{P \mid U}{c}{Q \mid V}$
 is \emph{valid}, if for all
$s \in S$ with traces~$\tau$, and for all configurations~$k'$
\begin{align*}
& P_s \text{ and } \running{s}{c} \steps{\tau} k' \text{ implies }\\
& \qquad 
\text{there is } s' \text{ with } k' = \stopped{s'}
\text{ and } Q_{s'}
\text{ and } \L(U_{s} \concat \tau) \subseteq L(V_{s'})
\end{align*}
\end{definition}
Note that $P$ and $U$ are evaluated in the pre-state~$s$,
whereas $Q$ and $V$ are evaluated in the post-state~$s'$,
and that $k' \neq \aborted$.

\begin{comment}
We require that specifications $U$ can be presented finitely,
e.g., as an enumeration $[u_1\ \IF\ \phi_1 \mid \cdots \mid u_n\ \IF\ \phi_n]$,
such that $\phi_1 \land \cdots \land \phi_n$ is a tautology.
In the tool, we always ensure this by completing a given specification
by an additional component $\epsilon\ \IF\ \lnot(\phi_1 \land \cdots \land \phi_n)$.
This ensures that absence of annotations specifies enfores empty trace $\trace{}$,
and it helps with ensuring that specifications do not denote the empty language in practice. Semantically:

We freely mix regular expression notation such as concatenation and repetition
between regular expression specifications~$U$ (uppercase), regular regular expressions~$u$ (lowercase), and traces~$\tau$.

By $U \sqsubseteq V$ we denote that $\L(U_s) \subseteq \L(V_s)$ for all~$s$.
We later present a very simple but occasionally inefficient algorithm to check this and that avoids an explicit construction of automata.
\end{comment}

\subsection{Hoare Logic Proof Rules}
\label{sec:logic}

% We point out that $U$ is consistently evaluated in initial states~$s$,
% % whereas $V$ is evaluated in final states~$s'$.
% The requirement that $U_s$ denotes a nonempty language is convenient
% for the soundness proof of the invariant rule (see below \red{and LaTeX comment}).
% % For all non-empty traces, we can derive this fact from the last condition
% after a zero-step execution, provided we have the precondition P for some state.
% This does not work with the premise of the loop rule, which requires the positive loop test,
% but the loop could exit immediately.

The proof rules are standard with respect to the functional correctness aspects
captured by the pre-/postcondition.
Regarding the trace specifications, there are some aspects that are worth discussing.

Typically, traces that occur during execution are simply added to the regular expression specification in the precondition. For example, the derived rule for the \code{emit} specification statemement is just
\begin{align*}
\infer[\textsc{Emit}]
    {\hoare{P \mid U}{\EMIT~a}{P \mid U \concat a}}{}
\end{align*}
It follows from the more general rule specification statement,
which may be thought of as mirroring a procedure call that is verified modularly.
The statement can be executed whenever guard $G$ follows from the current precondition~$P$.
Given then that binary relation~$R$ between unprimed and primed variables holds,
we need to establish $Q$ in that successor state,
and we also need to ensure that the extension of the pre-traces~$U$ by any trace of~$W$ is covered by $V$.
\begin{align*}
\infer[\textsc{Spec}]
    {\hoare{P \mid U}{\SPEC{x}{G}{R \mid W}}{Q \mid V}}
    {P \implies G &&
     P \land R \implies Q' \land (U \concat W \sqsubseteq V')}
\end{align*}
Recall that $Q'$ and $V'$ reference the successor state,
denoted in the logical formula in terms of the primed variables constrained by~$R$,
whereas~$U$ and~$W$ reference the state before executing the program
in accordance with the semantics.

\begin{example}
\label{ex:spec}
We can represent the loop body of our motivating example in \cref{fig:motivation:even-odd}
  by a single specification statement for demonstration purposes here
\begin{align*}
\SPEC{b}{\code{true}}{b' = \lnot b \mid a}
    \quad \text{ where }
        a = \begin{cases}
        \event{even},& \text{ if } \lnot b \\
        \event{odd}, &  \text{ otherwise}
        \end{cases}
\end{align*}
Recall the invariant-like characterization of traces from \cref{sec:motivation},
formalized as a regular expression specification $U(b)$ over program variable~$b$:
\begin{align*}
U(b) \quad \equiv \quad
(\event{even} \concat \event{odd})^*\ \IF\ \lnot b
    ~\mid~
(\event{even} \concat \event{odd})^* \concat \event{even}\ \IF\ b
\end{align*}

The part of the instantiated precondition of rule \textsc{Spec}
that relates the pre-traces to the post-traces takes $V = U$
to re-establish the invariant after executing the body,
which leads to the following proof obligation
\begin{align*}
b' = \lnot b \implies U(b) \concat a \sqsubseteq U(b')
\end{align*}
This can be reduced by case analysis where $U(b)$ and $U(b')$
pick the the opposite regular expression, respectively,
where the most recently emitted event~$a$ is underlined:
\begin{align*}
(\event{even}\concat\event{odd})^* \concat \underline{\event{even}}
    &\sqsubseteq
        (\event{even}\concat\event{odd})^* \concat \event{even}
    && \text{ if } \lnot b
    \\
(\event{even}\concat \event{odd})^* \concat \event{even} \concat \underline{\event{odd}}
    &\sqsubseteq (\event{even}\concat \event{odd})^*
    && \text{ otherwise }
\end{align*}
Clearly, both conditions are satisfied.
    \hfill $\heartsuit$
\end{example}

The consequence rule exhibits the usual co-/contravariance duality:
We may wish to conduct the proof with a weaker precondition~$P_2$
and larger set of traces seen so far~$U_2$
to establish a stronger postcondition~$Q_2$ than necessary
together with a more precise set of extended traces~$V_2$.
\begin{gather*}
\infer
    {\hoare{P_1 \mid U_1}{c}{Q_1 \mid V_1}}
    {P_1 \implies P_2 \land (U_1 \sqsubseteq U_2) & 
     \hoare{P_2 \mid U_2}{c}{Q_2 \mid V_2} & 
     Q_2 \implies Q_1 \land (V_2 \sqsubseteq V_1)}
   \\[-0.55cm]
   \hspace*{10.9cm}\textsc{Conseq}
\end{gather*}
This rule feeds contextual information
from predicate logic formulas into the inclusion conditions of regular language specification,
just as we have seen for the \textsc{Spec} rule in \cref{ex:spec}.
Note that this rule makes up for not having defined sequential concatenation
of trace specifications:
We can manually re-arrange regular expressions with respect to their outer conditionals.
The consequence rule can also feed information about the outcome of the test
into the premises of the \textsc{If} rule (not shown).

The rule for sequential composition just chains the intermediate conditions
and thereby avoids to refer to sequential composition of trace specifications:
\begin{align*}
\infer[\textsc{Seq}]
    {\hoare{P \mid U}{c_1;c_2}{R \mid W}}
    {\hoare{P \mid U}{c_1}{Q \mid V} &&
     \hoare{Q \mid V}{c_2}{R \mid W}}
\end{align*}

As we have motivated by the examples above, the rule for while loops
employs in addition to an ordinary invariant~$I$ a regular expression specification~$U$
that characterizes the traces observed when executing the loop.
\begin{align*}
\infer[\textsc{While}]
    {\hoare{I \mid U}{\WHILE\ t\ \DO\ B}{\lnot t \land I \mid U}}
    {\hoare{t \land I \mid U}{B}{I \mid U}}
\end{align*}
The same~$U$ occurs for all traces in the rule in four places.
This reflects the fact that the loop specification must already be given in a
closed form that is generalized with respect to an arbitrary number of iterations.
The evaluation of~$U$
is still relative to the context for that occurrence,
specifically, with the postcondition~$\lnot t \land I$ in the conclusion
we can narrow down~$U$ to those cases that actually match the exit state.

\begin{example}
\label{ex:loop}
In \cref{sec:motivation}, we had specified that the loop must not exit
when~$b$ is true, such that $U(b)$ simplifies to the desired
\emph{overall} behavior of the program with properly paired even and odd events
$(\event{even}\concat\event{odd})^*$.
    \hfill $\heartsuit$
\end{example}

The rules seen so far do not allow one to ever ``forget'' any prefix of the trace seen so far.
This means that the trace specification of loops is not really modular---%
$U$~must maintain any events that have occurred in the past.
The following framing rule therefore allows one to put a subpart of the execution into
the context of a prefix~$W$ that captures the past up to that point.
This rule encodes that the program execution itself cannot depend on past events,
because traces have no manifestation at runtime.
\begin{align*}
\infer[\textsc{Frame}]
    {\hoare{P \mid W \concat U}{c}{Q \mid W \concat V}}
    {P \land Q' \implies W \equiv W' &&
     \hoare{P \mid U}{c}{R \mid V}}
\end{align*}
Just as with the frame rule in Separation Logic~\cite{reynolds2002separation}
there is a side-condition that the frame~$W$ is independent of the program execution,
which we can express by an additional premise as shown.
This premise is trivially satisfied for plain regular expressions that
do not contain any state-dependent conditions
and our implementation uses this heavily as explained in~\cref{sec:implementation}.

With the help of this rule, we can justify the naive approach to
loops that just repeats any observation that can be made about the executions of the loop body
with out taking any sequencing constraints into account,
given here by a plain regular expression~$v$ to simplify the presentation.
\begin{align*}
\infer[\textsc{While}*]
    {\hoare{I \mid \epsilon}{\WHILE\ t\ \DO\ B}{\lnot t \land I \mid v^*}}
    {\hoare{t \land I \mid \epsilon}{B}{I \mid v}}
\end{align*}
It follows from rule \textsc{While} for invariant $U = v^*$,
and the frame rule for $W = v^*$ to justify the needed intermediate condition
    $\hoare{t \land I \mid v^*  \concat \epsilon}{B}{I \mid v^* \concat v }$
from the given premise of rule \textsc{While*}.
The rest of the justification is stitched together by rule \textsc{Conseq}
and the algebraic laws of the repetition operator.

As an example, taking $v = (\event{even} \mid \event{odd})$
we can derive the weaker characterization of the loop's behavior that
was mentioned in \cref{sec:motivation}.

\begin{theorem}[Soundness]
The rules presented in this paper are sound:
if the premises are valid according to \cref{def:hoare}
then the respective conclusion is also valid.
\end{theorem}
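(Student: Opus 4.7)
The plan is to proceed by rule induction on the derivation of a valid triple $\hoare{P \mid U}{c}{Q \mid V}$: for each rule, assume its premises are valid in the sense of \cref{def:hoare} and show that every big-step execution $\running{s}{c} \steps{\tau} k'$ with $P_s$ satisfies the three obligations of validity, namely (i)~$k' \neq \aborted$, (ii)~$Q_{s'}$ for the resulting state $s'$ with $k' = \stopped{s'}$, and (iii)~the trace containment $\L(U_s \concat \tau) \subseteq \L(V_{s'})$. Throughout, a small collection of elementary lemmas about regular expressions will be used without further comment: monotonicity of concatenation ($\L(x) \subseteq \L(y)$ implies $\L(w \concat x) \subseteq \L(w \concat y)$ and likewise on the left), associativity, and the unfolding laws $\L(u^*) = \L(\epsilon) \cup \L(u \concat u^*)$.

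For the \textsc{Spec} rule I would invert the semantics: the aborting case is excluded by the first premise $P \implies G$, and the terminating case yields some $s'$ with $R_{s,s'}$, $s' = s[\vec x \mapsto \vec v]$, and $\tau \in \L(U_s)$ (here I mean the local trace-specification $W$ of the statement, not the precondition's $U$---I would fix different names to avoid this clash in the actual write-up). The second premise, read with primed variables interpreted in $s'$, then yields $Q_{s'}$ and $\L(U_s \concat W_s) \subseteq \L(V_{s'})$, from which (iii) follows by monotonicity using $\tau \in \L(W_s)$. The \textsc{Conseq} rule is a direct diagram-chase, and \textsc{Seq} follows by splitting the derivation $\running{s}{c_1;c_2} \steps{\tau_1 \concat \tau_2} k'$ at the intermediate configuration and applying the two premises in sequence, using associativity of concatenation to glue $\L(U_s \concat \tau_1) \subseteq \L(V_{s_1})$ with $\L(V_{s_1} \concat \tau_2) \subseteq \L(W_{s'})$.

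The \textsc{While} rule is the main obstacle and will be proved by an inner induction on the number of loop unfoldings used to derive $\running{s}{\WHILE\ t\ \DO\ B} \steps{\tau} \stopped{s'}$. In the base case $\lnot t_s$ holds, $s' = s$, $\tau = \trace{}$, and (iii) is immediate. In the step case, the semantics gives $t_s$, an execution $\running{s}{B} \steps{\tau_1} \stopped{s_1}$, and a shorter loop execution from $s_1$ producing $\tau_2$ with $\tau = \tau_1 \concat \tau_2$; the premise applied at $s$ yields $I_{s_1}$ and $\L(U_s \concat \tau_1) \subseteq \L(U_{s_1})$, and the inductive hypothesis at $s_1$ then yields $I_{s'}$ and $\L(U_{s_1} \concat \tau_2) \subseteq \L(U_{s'})$, which compose by monotonicity. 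The delicate point is that the \emph{same} syntactic $U$ is re-evaluated at different states, so one must be careful to always interpret $U_s$, $U_{s_1}$, $U_{s'}$ as the state-dependent specifications of \cref{sec:regular}; this is exactly the design choice that makes the rule work across iterations.

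The \textsc{Frame} rule is handled by unfolding the premise for $c$ to obtain $Q_{s'}$ and $\L(U_s \concat \tau) \subseteq \L(V_{s'})$, then prepending $W$ on both sides and using the side condition $P \land Q' \implies W \equiv W'$, which holds at $(s,s')$ and therefore lets us replace $\L(W_s)$ by $\L(W_{s'})$ so that the conclusion's post-trace $W \concat V$ is evaluated in $s'$ as required. Rules for \textsc{If} and the remaining standard constructs are straightforward case splits on the semantics. I would finally remark that a completely formal version of this argument has been carried out in Isabelle/HOL (as mentioned in the introduction), and that the induction on loop unfoldings is the only place where structural induction on commands is not immediately sufficient.
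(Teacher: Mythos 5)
Your proof sketch is correct, and it follows the only natural route: establish that each rule preserves validity in the sense of \cref{def:hoare} (with an inner induction on the big-step derivation, i.e.\ the number of unfoldings, for \textsc{While}), gluing the trace obligations together with monotonicity and associativity of concatenation and with the state-dependent evaluation $U_s$, $U_{s_1}$, $U_{s'}$ handled exactly as you describe. The paper itself gives no pen-and-paper argument at all --- its ``proof'' is solely a pointer to the Isabelle/HOL mechanization --- so your write-up supplies on paper precisely the argument that the formalization carries out, rather than a genuinely different one. Two small points to tidy up in a final version: in the \textsc{Seq} case you should explicitly discharge the possibility that $c_1$ aborts (there is then no intermediate stopped configuration; the first premise rules this branch out, exactly as $P \implies G$ rules out abortion of a specification statement), and in \textsc{Frame} you have silently read the premise triple's postcondition $R$ as $Q$ --- that is the right reading (the $R$ in the paper's rule is evidently a typo, since the side condition $P \land Q' \implies W \equiv W'$ must be evaluated at the pre-/post-state pair $(s,s')$ of the premise execution), but it deserves a remark. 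Your observation that the renaming of the statement-local trace specification to $W$ avoids the clash with the precondition's $U$ in the semantics clause is likewise apt.
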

\begin{proof}
Mechanized in Isabelle/HOL, available for review%
\footnote{The mechanization will be made available permanently at a later date,
e.g., as an entry to the Archive of Formal Proofs at \url{https://www.isa-afp.org/}.} \\
\url{https://gist.github.com/gernst/6a156facbe402f6d7f8db8b6520c7d70}
    \qed
\end{proof}
We remark that the soundness proof does not in any way depend
on the fact that the trace specifications are regular,
or whether they can be presented as a finite enumeration.
% We furthermore state the following completeness property:
\begin{claim}[Completeness of Trace Annotations]
If there is an ordinary regular expression~$v$
that describes the traces of a given program~$c$,
i.e.,
$\hoare{P \mid \epsilon}{c}{Q \mid v}$ is valid,
we can find regular expression specifications to compose
this fact using the proof rules, including trace invariants for loops.
\end{claim}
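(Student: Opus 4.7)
The plan is to proceed by induction on the structure of the command $c$, in the style of Cook's relative-completeness argument for Hoare logic, assuming as usual that the underlying assertion language is expressive enough to encode the auxiliary reachability predicates we will need. It is convenient to strengthen the inductive claim to the form ``every semantically valid triple $\hoare{P \mid U}{c}{Q \mid V}$ whose trace specifications $U$, $V$ lie in the conditional-regular fragment is derivable,'' so that the induction composes over sub-commands. For specification statements, sequential composition, and conditionals the intermediate trace specifications can be read off the semantic strongest postcondition of the trace component, mirroring the functional case, and the corresponding rules \textsc{Spec}, \textsc{Seq}, and \textsc{If} apply after a routine use of \textsc{Conseq}. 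The only case requiring genuine work is \textsc{While}, where a \emph{closed-form} trace invariant must be exhibited even though the loop may iterate arbitrarily many times.

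For a loop $\WHILE\ t\ \DO\ B$ occurring in $c$, first convert the plain regular expression $v$ describing the post-trace into an equivalent deterministic finite automaton $M = (Q, A, \delta, q_0, F)$. For each state $q \in Q$, let $u_q$ be a plain regular expression whose language consists of exactly those words driving $M$ from $q_0$ to $q$ (constructible by Arden's method or by iterated Brzozowski derivatives on $M$), and let $\phi_q$ be the assertion characterising those program states reachable at the loop head by a computation whose accumulated trace leaves $M$ in state $q$. Writing $Q = \{q_1, \ldots, q_n\}$, propose as the loop's trace invariant
\[
   U \;\coloneqq\; u_{q_1}\ \IF\ \phi_{q_1}\ \mid\ \cdots\ \mid\ u_{q_n}\ \IF\ \phi_{q_n}
\]
combined with the standard functional invariant $I$ supplied by relative completeness. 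Semantic validity of the body's triple $\hoare{t \land I \mid U}{B}{I \mid U}$ then holds because, by determinism of $M$, any execution of $B$ from a state $s$ with $\phi_q(s)$ to $s'$ emitting trace $\sigma$ satisfies $\phi_{q'}(s')$ for $q' = \hat\delta(q, \sigma)$, and $u_q \concat \sigma \sqsubseteq u_{q'}$ by construction of $u_{q'}$. The surrounding derivation is assembled with \textsc{Seq} and \textsc{Conseq}: the prefix of $c$ before the loop inductively establishes whichever $\phi_q$'s characterise the trace seen on entry (including $\phi_{q_0}$ when no prior events have been emitted), and upon loop exit the $\phi_q$'s compatible with $\lnot t \land I$ all pick DFA states in $F$, whose combined $u_q$ is contained in $v$, so \textsc{Conseq} closes the outer triple.

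The inner triple $\hoare{t \land I \mid U}{B}{I \mid U}$ is discharged by the inductive hypothesis on $B$ as a strictly smaller command, using that its semantic trace language, conditioned on the DFA state at entry, is again regular. The principal obstacle is precisely this parameterisation by $q$: if $B$ itself contains nested loops, the induction must effectively be carried out on the \emph{product} of $B$ with $M$, treating the DFA state as a finite ghost component of the program state when invoking the inductive hypothesis and folding the product back into the $\phi_q$'s via \textsc{Conseq}. This is the essential step beyond Cook-style completeness for functional correctness, where intermediate assertions are determined by the program's state alone. Once the product view is in place, finiteness of $Q$ guarantees that every intermediate trace specification remains a \emph{finite} disjunction of plain regular expressions conditioned on assertions --- precisely the form admitted by our logic --- and expressibility of each $\phi_q$ in the assertion language is subsumed by the relative-completeness assumption, just as it is for the functional invariant $I$.
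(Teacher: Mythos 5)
Your proposal takes essentially the same route as the paper's (deliberately informal) proof idea: partition the reachable program states at each loop head according to the state of the finite automaton for $v$ reached by the accumulated trace, and use one conditional branch $u_q\ \IF\ \phi_q$ per automaton state as the trace invariant, with the $\phi_q$ expressible by the usual relative-completeness assumption. You simply elaborate this sketch further than the paper does (Cook-style strengthening of the inductive claim, determinism of the DFA for the preservation step, and the product view for nested loops), and that elaboration is consistent with the paper's intended argument.
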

\begin{proof}[Main Idea]
We can partition the actual state space of the program
into equivalence classes with respect to their trace prefixes at each program location.
Since the automaton corresponding to~$v$ has finitely many states,
we can describe them by a finite number of regular expressions
that are conditional upon a predicate logic formula characterizing the respective equivalence class.
    \qed
\end{proof}
In this paper, we do not make this claim formally precise,
but we think that the intuition is clear.
For example, the equivalence classes for the odd/even program from \cref{sec:motivation} would be given by~$\lnot b$ and~$b$ as expected.
The merit of this completeness property is that the proof rules themselves
already contain sufficient information to capture any regular behavior
relative to an adequate background theory in which the formulas in the conditions can be expressed.

%%% Local Variables:
%%% mode: LaTeX
%%% mode: TeX-PDF
%%% TeX-source-correlate-mode: t
%%% TeX-master: "main.tex"
%%% ispell-local-dictionary: "en_GB"
%%% End:

\section{Tool Support in SecC}
\label{sec:implementation}

We have implemented support for trace specifications in SecC,
which is an autoactive deductive program verifier for low-level concurrent C code with expressive security specifications.
It is based on the logic SecCSL~\cite{ernst:cav2019},
which for the purposes to this paper is analogous to standard concurrent separation logic.
The tool and case studies are publicly available at
\url{https://bitbucket.org/covern/secc}.

The tool follows the classical design of most deductive tools,
in which all C functions are specified modularly in terms of user-supplied contracts,
such that they can be verified in isolation to avoid the combinatorial path explosion
of interprocedural verification.
To that end, SecC supports program annotations for function contracts, loop invariants, auxiliary statements such as intermediate assertions and other proof hints.
It is possible, too, to specify logical functions and separation logic predicates,
as well as to encode mathematical proofs (e.g. by induction) as lemmas and lemma functions,
see e.g.~\cite{jacobs2010verifast}.

\subsection{Specification Syntax}

For this work, we extended the specification language by trace annotations.
We use the example from \cref{sec:motivation} as shown in \cref{fig:even-odd}
to explain the specification syntax in general as well as for the traces.

\begin{wrapfigure}{l}{0.5\textwidth}
\vspace*{-0.8cm}
\hrule
\vspace*{-0.1cm}
\lstinputlisting{code/even_odd.c}
\vspace*{-0.4cm}
\caption{Even/odd example in SecC.}
\vspace*{+0.1cm}
\hrule
\vspace*{-0.8cm}
\label{fig:even-odd}
\end{wrapfigure}
In SecC, auxiliary code is wrapped inside \code{\_(...)}, an idiom taken from VCC.
This is used here for different purposes, foremost, in the contract of function \code{even\_odd()} to specify its behavior.
An annotation \code{\_({\bfseries trace} $U$)} for a function with body~$c$
corresponds to a Hoare triple $\hoare{\_ \mid \epsilon}{c}{\_ \mid U}$
(pre-/postconditions~$P$ and~$Q$ are absent in the example).
Inside the loop, the specification statement \code{\_({\bfseries emit} a)} generates the respective events.
It is the proof engineer's task to place these at the program locations of their interest---in the future we may support events that are implicitly generated, e.g., by function calls as in \cite{disney2011temporal}.
There is also a trace annotation for the loop, which showcases the concrete syntax for conditionals, analogous to the presentation earlier in this paper
albeit in SecC each choice is listed separatedly.
By convention, absence of a trace annotation, even for just a particular case,
enforces that no events may be emitted at all.%
    \footnote{We have omitted some annotations related to SecC's enforcement of absence of timing side-channels, which requires an invariant that \code{b} never contains any secret
such that we can branch on it,
similarly for the return value of function \code{nondet()}, shown at the top of the code listing, that models nondeterministic choice.}

\begin{comment}
In the example, there are some additional specifications related to
the timing-sensitive security guarantees that SecC enforces.
Notably, branching on a secret is not allowed.
In this example, this means that we have to show that the loop test
always evaluates to a non-secret value, which means we have to ensure
that both sides of the \code{||} operator satisfy this constraint.
We will subsequently omit this aspect in the examples.

It is specified to always return a non-secret value, i.e., the \code{\bfseries result} adheres to the \code{low} classification.
Similarly, the \code{b} local variable never contains a value that the attacker cannot infer from the source code:
Initially, it is initialized by constant value \code{0} and each iteration of the loop deterministically swaps it. Therefore, placing an invariant that the value of \code{b} remains public is justified.
\end{comment}

\subsection{Verification Engine}

In contrast to the Hoare-style rules of \cref{sec:approach},
which mirror those of the logic SecCSL behind the tool,
SecC operationalizes the proof rules by a forward symbolic execution algorithm,
first described here~\cite{berdine2005symbolic}, similarly to VeriFast~\cite{Jacobs11}
or the Silicon backend of Viper~\cite{muller2016viper}.
The engine traverses the program using execution states, consisting of
a store of symbolic variables, a path constraint, a symbolic representation of the heap, and for this work a trace prefix.
The design of the logic necessitates being careful about branching
not just for case distinctions in the program but also in the logic itself.
To address this, SecC always takes apart relevant case distinctions
and follows the branches individually, pruning those with an unsatisfiable path constraint.
The same principle is applied to the regular trace specifications,
which are broken apart into plain regular expressions eagerly,
such that SecC can make use of the rule \textsc{Frame},
e.g., to concatenate the possible trace extensions after a while-loop is dispatched modularly.
However, one has to be careful not to discard any execution branch vacuously
just because there is no trace specified for it.
Therefore, SecC always completes behavioral annotations with an empty trace
that is the negation of the disjunction of all other cases (which might of course be unsatisfiable).

At the end of each execution path, SecC needs to check
that the trace produced up to this point is covered by the behavioral specification
annotated to the surrounding context (a C function or a loop).
The inclusion check is implemented for plain regular expressions
by an approach similar to the one shown in~\cite{almeida2012deciding}.
It is based on the derivative~$\derive{a}{u}$ of~$u$ with respect to symbol~$a$
that captures the language of suffixes of~$u$ after a leading occurrence of~$a$,
i.e., $\L(\derive{a}{u}) = \{ \tau \mid a \concat \tau \in \L(u) \}$.
% 
% From the following unfolding property of language inclusion
% \begin{align*}
% u \sqsubseteq v \iff 
%     (u \text{ nullable} \implies v \text{ nullable})
% ~\text{and}~
%         (\forall\ a\in A.\ \derive{a}{u} \sqsubseteq \derive{a}{v})
% \end{align*}
From this definition we can derive an algorithmic check,
$\Gamma \vdash u \sqsubseteq v$, which maintains
a set~$\Gamma$ of expressions seen already
        and proceeds recursively as shown below.
Effectively, this algorithm computes in $\Gamma$ a simulation relation between~$u$ and~$v$
from the transitions that are possible over~$u$ via the derivative.
% If $u = \epsilon$, then we can just determine whether $v$ is nullable.
% If $v = \varnothing$ then $u$ must also denote the empty language.
% Otherwise, we $v$ must be nonempty and we unfold the inclusion with the help of the derivative
% for all symbols~$a \in A$ of the finite alphabet:
\begin{align*}
\Gamma \vdash u \sqsubseteq v
    & \iff
      \begin{cases}
      % v \text{ nullable},   & \text{if } u = \epsilon \\
      \mathit{true}, & \text{if } (u,v) \in \Gamma \\
      u \text{ empty}, & \text{if } v = \varnothing \\[2pt]
      \begin{array}{@{}l@{}}
      \Gamma \cup \{(u,v)\} \vdash \derive{a}{u} \sqsubseteq \derive{a}{v}, \forall\ a\in \mathit{first}(u),\\[-2pt]
      \quad\text{and } u \text{ nullable} \implies v \text{ nullable}
      % \footnotemark
      \end{array}
      & \text{otherwise}
      \end{cases}
\end{align*}
% \footnotetext{An earlier implementation of this algorithm in SecC
%               lacked the implication check wrt. nullability,
%               which caused some intuitive but inadequate specifications to pass verification.
%               We point these out in \cref{sec:case-studies}.
%               This defect was uncovered by mechanizing the algorithm in Isabelle/HOL,
%               which demonstrates the utility of computer supported proofs,
%               even when they are not applied at the code-level.}%
Here, $\mathit{first}(u) \subseteq A$ is an overapproximation of the set of first symbols of words in~$\L(u)$, which can be computed recursively from the syntax.
% Taking the entire alphabet~$A$ works, too, but the smaller $\mathit{first}(u)$ is more efficient.
Termination is ensured by unfolding each pair of expressions once only (first line).%
    \footnote{Note that it is crucial that the check whether a given pair of expressions is contained in $\Gamma$ already considers equivalence modulo reordering and duplicates of the choice operator, otherwise, one may keep accumulating larger and larger expressions.}
% We chose to include this algorithm here even if the ideas behind it are well-known,
% mainly to showcase how easy it is to support this feature in a language-based setting,
% i.e., without the need to convert regular expressions to automata.
% The approach appears to be reasonably efficient for the specifications we had looked at so far.
% It is straight-forward to support error reporting:
% given we keep track of the prefix trace that has lead through the recursion:
The second case catches when $\derive{a}{v}$ has become empty,
which corresponds to an event~$a$ of~$u$ that is not covered by the specification.
Conversely, in the last case when~$u$ is nullable but~$v$ is not,
some required events were missed.

% SecC reports these errors with some detail.
% For example, when the specification expects the sequence $\event{a} \concat \event{b}$
% but the program emits just~$\event{a}$, the tool outputs the following information,
% together with a symbolic path through the program (which is not shown here):
% \begin{verbatim}
%     unmatched trace:          a .
%     residual  specification:  b
% \end{verbatim}
% whereas when the program outputs $\event{a} \concat \event{c}^*$ instead, the error message reads as
% \begin{verbatim}
%     unmatched trace:          a . c
%     residual  trace:          c *
%     residual  specification:  b
% \end{verbatim}
% This feature turned out to be helpful even for the relatively large unfoldings
% of the regular expressions of the case studies, because errors typically manifest when either~$u$ or~$v$ is small.

%%% Local Variables:
%%% mode: LaTeX
%%% mode: TeX-PDF
%%% TeX-source-correlate-mode: t
%%% TeX-master: "main.tex"
%%% ispell-local-dictionary: "en_GB"
%%% End:

% \section{Case-Studies}
% \label{sec:case-studies}

% We present two case studies, followed by a discussion in \cref{sec:discussion}.

\section{Case Study: Regular Expression Matching}
\label{sec:lex}

A canonical case study for the specification approach of this paper
is to verify the state machine of a matcher that checks whether some input
adheres to a given concrete regular expression.
For instance, we aim at matching E-Mail addresses,
here with much simplified format \code{[a-z]+[@][a-z]+[.][a-z]+}
which expects a name part and domain part with a single dot,
where both parts are separated by an \code{@} sign.

\begin{wrapfigure}{l}{0.55\textwidth}
\vspace*{-0.8cm}
\hrule
\vspace*{-0.1cm}
\lstinputlisting{code/lex.c}
\vspace*{-0.4cm}
\caption{E-Mail address matcher in SecC.}
\vspace*{+0.1cm}
\hrule
\vspace*{-0.8cm}
\label{fig:lex}
\end{wrapfigure}
Often, such matchers are implemented as a library in programming languages,
which at runtime translates textual representations of regular expressions into some internal automaton.
For high-performance code, however, the preferred alternative is to compile this automaton upfront into source code, e.g., using the popular tool \texttt{re2c}~\cite{bumbulis1993re2c}.
Possible translation schemes make use of language features like loops/gotos and switch/conditionals, representing the automaton either implicitly via control flow,
or with an outer loop and an explicit state variable,
as shown in the example in \cref{fig:lex}.
We will show how to verify this code with respect to the declarative specification
of the expected matches in terms of the regular expression trace annotation at the top.%
    \footnote{Available in the SecC bitbucket repository at \code{examples/case-studies/matcher.c}}

The input is read by repeated calls to some external function \code{next()}
which returns the next input character.
The switching logic of the state machine is implemented via variable \code{state} which is modified according to transitions encoded in the nested chain of conditionals.
For example, state~0 expects a first letter symbol as the repetition
is non-empty, whereas state~1 may subsequently transition over an occurrence of the symbol~\code{@}.
The loop terminates in state~6 after the input becomes empty, encoded by \code{c == -1}.
To simplify matters further, the code simply calls \code{abort()}
to denote an unexpected input character is encountered,
which exposes nontermination via a postconditon \code{\_(ensures false)},
i.e., a call to this function will vacuously verify that branch of the computation.

The function \code{check()}, shown in \cref{fig:check},
denotes the events that are emitted in relation to the respective input characters,
and also includes an \code{error} event for all other characters.
This additional event does not occur in the top-level specification
of \code{lex()}, such that clearly the only way to satisfy the trace constraints
when such an error is encountered is to reject the match via a call \code{abort()}.
Function \code{check()} is a specification artifact
that separates out the \emph{interpretation} of inputs in terms of the
four abstract events \code{letter}, \code{at}, \code{dot}, and \code{eof}.
While it may appear somewhat cumbersome to draw this connection explicitly,
we emphasize that this just reflects the prototypical nature of the design
that does not ascribe any meaning to events upfront.
Of course one may provide support for certain kinds of events out-of-the-box,
perhaps complemented by character class definitions as abbreviations for finite enumerations.

\begin{wrapfigure}{l}{0.5\textwidth}
\vspace*{-0.8cm}
\hrule
\vspace*{-0.1cm}
\lstinputlisting{code/check.c}
\vspace*{-0.4cm}
\caption{Specifying the correspondence between inputs and abstract events.}
\vspace*{+0.1cm}
\hrule
\vspace*{-0.8cm}
\label{fig:check}
\end{wrapfigure}

The verification consists of two parts:
Annotating the \code{while} loop with conditional trace invariants
that must jointly be preserved over the loop body,
and ensuring that this annotation implies correctness of the outer procedure.
The possible trace prefixes in the respective states are captured as follows.
\begin{lstlisting}[escapechar=!]
_(trace ()                                      if state == 0)
_(trace letter+                                 if state == 1)
_(trace letter+ at                              if state == 2)
  ...
_(trace letter+ at letter+ dot letter+ eof      if state == 6)
\end{lstlisting}
% _(trace letter+ at letter+                      if state == 3)
% _(trace letter+ at letter+ dot                  if state == 4)
% _(trace letter+ at letter+ dot letter+          if state == 5)
They reflect the part of the input that has been matched already,
starting with the empty trace, written as \code{()} in SecC in state~0,
up to the final trace in state~6.
Given the negated loop test at exit together with the invariant,
the last line is the only one with a satisfiable guard,
which is precisely the guarantee we need to satisfy the contract of \code{lex()}.
Preservation over a single iteration over the loop body is briefly discussed with respect to state~1.
Its corresponding trace \code{letter+} is initially established from a single \code{letter} event in the incoming transition from state~0.
Reading the next character in the range \code{[a-z]} in state~1 extends the trace to \code{letter+ letter}
by another \code{letter} event from \code{check()},
which is subsumed by \code{letter+} in state~1 again as required.
On the other hand, reading an \code{@} symbol produces the trace specified
for state~2, which is the one transitioned to at the end of this iteration.
All other transitions work analogously.

\section{Case-Study: VerifyThis Casino Challenge}
\label{sec:casino}

In this section we discuss the Casino case study
that fueled a series of online discussions in the context of the VerifyThis competition.
The case study deals with a Casino that was originally implemented as a smart contract on the Ethereum blockchain,
but which has since been specified, modeled, and verified in a number of different approaches.

The Casino game is offered by an operator to a player who can bet on the
outcome of a coin toss, which is decided upfront but remains hidden in an envelope,
until after the player has placed their bet and the operator decides to resolve.
Cheating by the operator is prevented by encoding the sealed envelope by
publishing the cryptographic hash of a secret number whose last bit determines the winning side of the coin.
If the player wins, they receives double the amount of money that they bet,
otherwise it goes to the operator.
After that, the game can be played again.
For a very nice graphical exposition of the rules by Wolfgang Ahrendt
we refer the reader to \url{https://verifythis.github.io/casino/}.

\begin{figure}[t]
\hrule
\vspace*{0.2cm}
\centering
\includegraphics[width=0.9\textwidth]{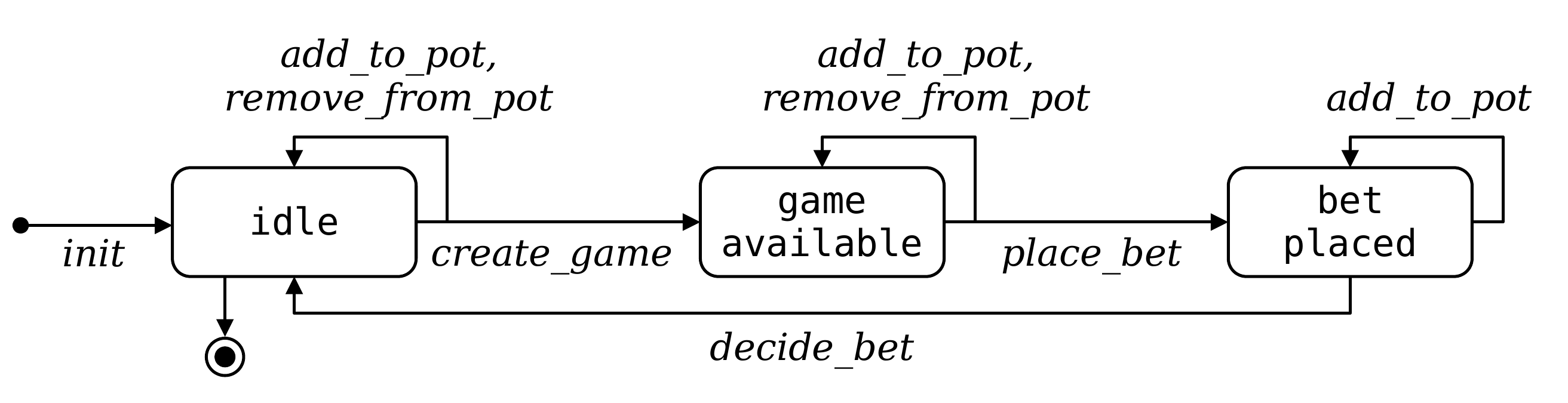}
\caption{Automaton specification (adapted from a model by Matthias Ulbrich, \url{https://verifythis.github.io/casino/spec/} )}
\vspace*{0.2cm}
\hrule
\vspace*{-0.4cm}
\label{fig:casino-statemachine}
\end{figure}
As mentioned in the introduction, a particular goal of the efforts around this challenge was to bridge between different verification approaches,
such as finite state models, contract-based models, and hybrid solutions.
An aspect in focus is the \emph{pot} of money associated with the game,
which is supplied by the operator and which needs to cover the prize of the player,
causing an invariant that encodes this requirement.
\Cref{fig:casino-statemachine} shows a high-level description that avoids
draining the pot by simply restricting the operator
to reduce the amount in the pot as long as an unresolved bet has been placed.

SecC models this game as follows:%
    \footnote{Available in the SecC bitbucket repository at \code{examples/case-studies/casino.c}}
Data on the blockchain is treated as memory that has public visibility,
a feature that is supported by the logic.
All data passed in and out of the operations that encode the moves in the game
are likewise specified to be public.
Ownership, e.g., of a wallet or a payment that has been sent but not received yet,
naturally maps to resources in Separation Logic, encoded into abstract predicates
that cannot be duplicated.

The game itself is specified as a main function with a top-level loop
that nondeterministically chooses
among the next possible moves and then calls the function
that implements the respective transition.
Each of these functions emits one of the events shown in \cref{fig:casino-statemachine}.
The corresponding behavior of the game is captured by the regular expression shown below.
It declaratively specifies all traces through the automaton in \cref{fig:casino-statemachine}.
We point out the inclusion of some trailing changes to the pot.
% which we had omitted in an earlier version, and which was caught
% only after by fixing the subsumption checking algorithm as discussed in \cref{sec:implementation}.
Moreover, it is noteworthy that after a \event{place\_bet} event,
no further \event{remove\_from\_pot} can happen until the game is decided.
\begin{lstlisting}
_(trace init
    ((add_to_pot | remove_from_pot)* create_game
     (add_to_pot | remove_from_pot)* place_bet
      add_to_pot*                    decide_bet)*)
    (add_to_pot | remove_from_pot)* 
\end{lstlisting}
Since the game can be in one of three states (idle, game available, bet placed),
the trace invariant for the loop has three parts, that analogously to those of the matcher in \cref{sec:lex}
reflect the different stages of the game.
Like with the even/odd example, the trace invariant duplicates as a prefix
the entire specification in order to be inductive,
which results in a somewhat large, but structurally straight-forward annotation,
for example that right before deciding the bet we have seen everything \emph{apart} from the event \event{decide\_bet}.
\begin{lstlisting}
_(trace (...)
    ((add_to_pot | remove_from_pot)* create_game
     (add_to_pot | remove_from_pot)* place_bet
      add_to_pot*)  if state == BET_PLACED)
\end{lstlisting}
where \code{(...)} here omits the entire game specification as shown above.

We have experimented further with simplified variants that just include
the functionality directly related to the trace behaviors.%
    \footnote{Available in the repository at \code{examples/case-studies/casino-statemachine.c}}
The approach taken for the full case study is complemented by the naive approach
with rule \textsc{While*}.
Furthermore, we formulate the game as a set of several mutually tail-recursive procedures,
each corresponding to one state.
This affects the annotations in two ways:
First, each function on its own can specify its contribution to the observable behaviors, and second, we are effectively encoding the summary-based approach from \cref{sec:motivation},
which in comparison to an invariant reasons about the trace suffixes that are
yet to be fulfilled, which in comparison to the corresponding part of the loop annotation,
specifies possible additions to the pot and includes a final \code{decide\_bet} before the game repeats recursively (where again \code{(...)} omits the game but without \event{init}).
\begin{lstlisting}
void game_placed()
_(trace (add_to_pot* decide_bet) // <-- behavior of just game_placed
        (...))
\end{lstlisting}

\subsection{Discussion}
\label{sec:discussion}

The two case studies demonstrate how behavioral specifications
in terms of regular expressions can capture declaratively properties of C functions
as part of their top-level modular contracts.
However, loop annotations become quite involved,
and in the approach taken
% At least in the particular approach we have chosen,
the user has to come up with the equivalence classes of regular
expressions corresponding to the control states of the loop.

Moreover, these expressions are typically larger than the original annotation,
and are repeated in significant parts for the many different cases.
We think that this can be addressed for example by allowing
the user to abbreviate expressions as it is possible for example
in typical scanner generators like \code{flex}.
Another idea is to infer such specifications automatically,
e.g., by having the user annotate the \emph{conditions} that partition the state space,
but not the regular expressions themselves.
We think that this is feasible but we leave this idea for future work.

SecC verifies both case studies in less than~5s on a Thinkpad T470p.
We do not think that these numbers are particularly meaningful,
as they were done with a cold-cache JVM and they include many calls to an external SMT solver.
The overhead introduced by the regular expression inclusion check contributes noticeably to the time to verify functions which have many paths,
like the main loops of the case studies.
Nevertheless, this was never a limiting factor here, on the contrary,
thanks to the decidability of the inclusion check once the conditionals are settled,
there is no additional manual effort involved to help out the verifier with additional proof hints,
as it is often typical with expressive functional contracts.
Perhaps, when scaling up to regular expressions as they may occur in practice,
an efficient automata-based implementation may be preferable.

Finally, as hinted at already in \cref{sec:lex},
built-in support for certain kinds of domain-specific events
can help streamline the verification process.
For example, related work has considered opening and closing of file handles~\cite{das2002esp},
as well as function calls and returns~\cite{disney2011temporal}.
These applications, however, would strongly benefit from a context-free specification language, instead of a regular one, to pair each open with a close for instance.
Similarly, attaching data drawn from finite sets or even symbolic data like file handles to events is useful.
However, such extensions reflect different trade-offs wrt.\ automation as the inclusion check may become undecidable.
This opens up a research space in between the work presented here
and the highly expressive approaches, in which one can experiment
with practical heuristics supported by domain-specific proof hints when needed.

% \section{Further Case Studies}
% 
% \begin{itemize}
% \item CDDC input handling for hotkey to switch something
% \item TCP handshake? other protocols
% \item Security access 
% \item verify a parser for a particular regular language (e.g. E-Mail addresses),
%       maybe generated from re2c
% \end{itemize}

%%% Local Variables:
%%% mode: LaTeX
%%% mode: TeX-PDF
%%% TeX-source-correlate-mode: t
%%% TeX-master: "main.tex"
%%% ispell-local-dictionary: "en_GB"
%%% End:

\section{Related Work}
\label{sec:related}

Deductive verification tools like SecC~\cite{ernst:cav2019}
come with strong support for logical specifications,
which would in principle enable to encode regular traces into lists explicitly.
Nevertheless, there are several recent approaches that aim for a more first-class support of specifying behaviors,
which enables one to provide certain guarantees (soundness of compositions, automation):

A common strategy is to encode behaviors into abstract permission-based predicates,
which integrates event histories with expressive logical data types.
For example, VeriFast supports such I/O-specifications and has support
not just for safety properties but also for liveness~\cite{penninckx2019specifying,jacobs2020modular}.
VerCors has a highly elaborate mechanism to capture behaviors as process models~\cite{oortwijn2020abstraction},
which can then be given to a model checker to verify system-wide properties,
based on the earlier work~\cite{blom2015history} that tracks histories of events similarly.
% A noteworthy aspect of this work is that it comes with a fully mechanized soundness proof.
Another work that cuts into this direction is Igloo~\cite{sprenger2020igloo},
in which modularity and composition is a key concern.

Interestingly, all of these approaches reason in the \emph{opposite} direction as we do:
The precondition specifies, which traces are still allowed to occur, in other words,
a forward simulation between the code and the process or recursively defined model of the externally
visible behavior is maintained as part of the auxiliary state.
Our initial design of the logic was in fact based on this idea,
but we did not (yet) succeed to prove end-to-end soundness wrt.~\cref{def:hoare}.
Likely this can be remedied with an additional intermediate backward simulation,
as hinted at in \cite{oortwijn2020abstraction},
but for this work we settled for the design as presented which we think is very clean.
Nevertheless, looking further into this issue is certainly of interest,
in parts because~\cite{sprenger2020igloo} relies on a proof system with precisely this kind of guarantee.

Session types~\cite{huttel2016foundations} are another very active reasearch area:
Here, the interactions between software components are captured as part of the type system.
Similarly, interface automata~\cite{de2001interface} capture possible interactions at the system level.
Temporal contracts have been proposed for a functional language in~\cite{disney2011temporal},
but considering runtime checking only, and giving completeness to ease the specification burden.
Work that looks into temporal properties of smart contracts is~\cite{permenev2020verx}.

%%% Local Variables:
%%% mode: LaTeX
%%% mode: TeX-PDF
%%% TeX-source-correlate-mode: t
%%% TeX-master: "main.tex"
%%% ispell-local-dictionary: "en_GB"
%%% End:

\section{Conclusion}

We have presented a Hoare logic that integrates external behaviors as regular expressions
into deductive verification systems.
This can be used to declaratively specify sequences of events occurring during the execution.
The approach supports loops with control states via conditions
in the specification, such that interesting sequential behavior across multiple iterations can be captured.
There is much room for experimentation
with different trade-offs between automation, ease of specification, and expressiveness,
to be explored in the future.

\bibliographystyle{splncsnat}
\begingroup
  %\microtypecontext{expansion=sloppy}
  \small % ensure correct font size for the bibliography
  \bibliography{references}
\endgroup

\end{document}